\numberwithin{equation}{section}
\numberwithin{figure}{section}
\theoremstyle{plain}
\newtheorem{thm}{\protect\theoremname}
\theoremstyle{definition}
\newtheorem{example}[thm]{\protect\examplename}
\theoremstyle{remark}
\newtheorem{rem}[thm]{\protect\remarkname}
\theoremstyle{definition}
\newtheorem{problem}[thm]{\protect\problemname}
\theoremstyle{plain}
\newtheorem{lem}[thm]{\protect\lemmaname}
\providecommand{\examplename}{Example}
\providecommand{\lemmaname}{Lemma}
\providecommand{\problemname}{Problem}
\providecommand{\remarkname}{Remark}
\providecommand{\theoremname}{Theorem}
\begin{document}
\title[Coordinate free effective diffusion in channels ]{ A coordinate free formulation of effective diffusion on channels}
\author{Carlos Valero Valdés\\
Departamento de Matemáticas\\
Universidad de Guanajuato\\
Guanajuato, México}
\begin{abstract}
We study diffusion processes in regions generated by ``sliding''
a cross section by the phase flow of vector filed on curved spaces
of arbitrary dimension. We do this by studying the effective diffusion
coefficient $D$ that arises when trying to reduce the $n$-dimensional
diffusion equation to a 1-dimensional diffusion equation by means
of a projection method. We use the mathematical language of exterior
calculus to derive a coordinate free formula for this coefficient
in both infinite and finite transversal diffusion rate cases. The
use of these techniques leads to a formula for $D$ which provides
a deeper understanding of effective diffusion than when using a coordinate
dependent approach.
\end{abstract}

\maketitle
\global\long\def\CC{\mathbb{C}}%

\global\long\def\RR{\mathbb{R}}%

\global\long\def\ZZ{\mathbb{Z}}%

\global\long\def\pd#1#2{\frac{\partial#1}{\partial#2}}%

\global\long\def\pdd#1#2{\frac{\partial^{2}#1}{\partial#2^{2}}}%

\global\long\def\mpd#1#2#3{\frac{\partial#1}{\partial#2\partial#3}}%

\global\long\def\pdo#1{\frac{\partial}{\partial#1}}%

\global\long\def\pddo#1{\frac{\partial^{2}}{\partial#1^{2}}}%

\global\long\def\der#1#2{\frac{d#1}{d#2}}%

\global\long\def\dder#1#2{\frac{d^{2}#1}{d#2^{2}}}%

\global\long\def\diff{d}%

\global\long\def\do#1{\frac{d}{d#1}}%

\global\long\def\map{\rightarrow}%

\global\long\def\tangent{T}%

\global\long\def\cotangent{\tangent^{*}}%

\global\long\def\SS{\mathcal{S}}%

\global\long\def\KK{\mathcal{K}}%

\global\long\def\NN{\mathcal{N}}%

\global\long\def\sym#1{\hbox{S}^{2}#1}%

\global\long\def\symz#1{\hbox{S}_{0}^{2}#1}%

\global\long\def\SO{\hbox{SO}}%

\global\long\def\GL{\hbox{GL}}%

\global\long\def\U{\hbox{U}}%

\global\long\def\tr{\hbox{tr}}%

\global\long\def\and{\text{ and }}%

\global\long\def\dv{\nabla\cdot}%

\global\long\def\JJ{\boldsymbol{J}}%

\global\long\def\DD{\boldsymbol{D}}%

\global\long\def\pp{\boldsymbol{p}}%

\global\long\def\vol{\hbox{vol}}%

\global\long\def\where{\text{ where }}%

\global\long\def\dxy{dx_{1}\wedge dx_{2}}%

\global\long\def\dxz{dx_{1}\wedge dx_{3}}%

\global\long\def\dyz{dx_{2}\wedge dx_{3}}%

\global\long\def\dxyz{dx_{1}\wedge dx_{2}\wedge dx_{3}}%

\global\long\def\for{\text{ for }}%

\global\long\def\LL{\mathcal{L}}%

\global\long\def\JJ{\mathcal{J}}%

\global\long\def\CH{\mathcal{C}}%

\global\long\def\WA{\mathcal{W}}%

\global\long\def\SE{\mathcal{S}}%

\global\long\def\fl#1{\mathcal{F}_{#1}}%

\global\long\def\vf{\omega_{vol}}%

\global\long\def\derz#1{\left.\frac{d}{d#1}\right|_{#1=0}}%

\global\long\def\pdoat#1{\left.\frac{\partial}{\partial#1}\right|}%

\global\long\def\AA{\mathcal{A}}%

\global\long\def\BB{\mathcal{B}}%

\global\long\def\el{\text{ is in }}%

\global\long\def\UU{\bm{U}}%

\global\long\def\VV{\bm{V}}%

\global\long\def\HH{\bm{H}}%

\global\long\def\GG{\mathcal{G}}%

\global\long\def\si{\text{ if }}%

\global\long\def\nn{\bm{n}}%

\global\long\def\aa{\bm{\alpha}}%

\global\long\def\tgt{\bm{t}}%

\global\long\def\vv{\bm{v}}%

\global\long\def\ww{\bm{w}}%

\global\long\def\bb{\bm{b}}%

\global\long\def\WW{\bm{W}}%

\global\long\def\dc{\mathcal{D}}%

\global\long\def\UUU{\mathcal{U}}%

\section{Introduction}

The purpose of this paper is to present a coordinate free formulation
of the theory of effective diffusion on channels. This problem has
been studied extensively in the literature with the use of specific
coordinate systems (e.g \cite{kn:aproximations,kn:bradley,kn:entropybarrierzwanzig,kn:kp-diffusion-projection,kn:kp-Extendex-fj-variational,kn:kp-fick-jacob-correction,kn:ogawa,kn:reguerarubi,kn:yariv,va:fj3dcurves,va:projdiffplanecurve}).
We tackle the coordinate free formulation by using modern tools from
differential geometry; more specifically: vector field flows and exterior
calculus. The advantages of taking this point of view is that it provides
a unified theory with the following properties.
\begin{enumerate}
\item The formulas obtained hold for channels of any dimension in arbitrary
flat and curved spaces.
\item By using this geometric approach one can gain a deeper and more intuitive
understanding of the formulas for the effective diffusion coefficient:
both in the finite and infinite transversal diffusion rate case. 
\item Our approach has lead us to identify the Fick-Jacobs equation as a
standard diffusion equation. To do this we need to change the metric
in the variable parametrizing the cross sections of the channel,  which
has also lead us to modify the the definition of effective density
function and effective diffusion coefficient used in most of the literature.
\item Our formulas hold for an arbitrary selection of cross sections of
the channel. This generality has lead us to identify the concepts
of natural and imposed projection maps.
\end{enumerate}

\subsection{Plan of the paper}

In section 2 we show how to generate channels using vector fields
on arbitrary spaces, and how these provide us with cross sections
which allows us to reduce a general diffusion equation to a diffusion
equation with only one spatial variable. In section 3 we present formulas
for the effective diffusion coefficient (both in the infinite and
finite transversal diffusion rate cases) avoiding the use of the language
exterior calculus, so that the main results can be understood in a
non-technical manner. In fact, the main concept needed in our formulas
is simply that of the flux of a vector field across a hyper-surface.
We show that there is a special choice of cross section of a channel
in which the formulas for the effective diffusion coefficient for
the finite and infinite transversal rate cases coincide. Section 4
contains the coordinate free derivation of our formulas using exterior
calculus, and we also how we can recover the coordinate dependent
formulas from our general result.

\section{Channel geometry through vector field flows}

\begin{figure}
\includegraphics[scale=0.25]{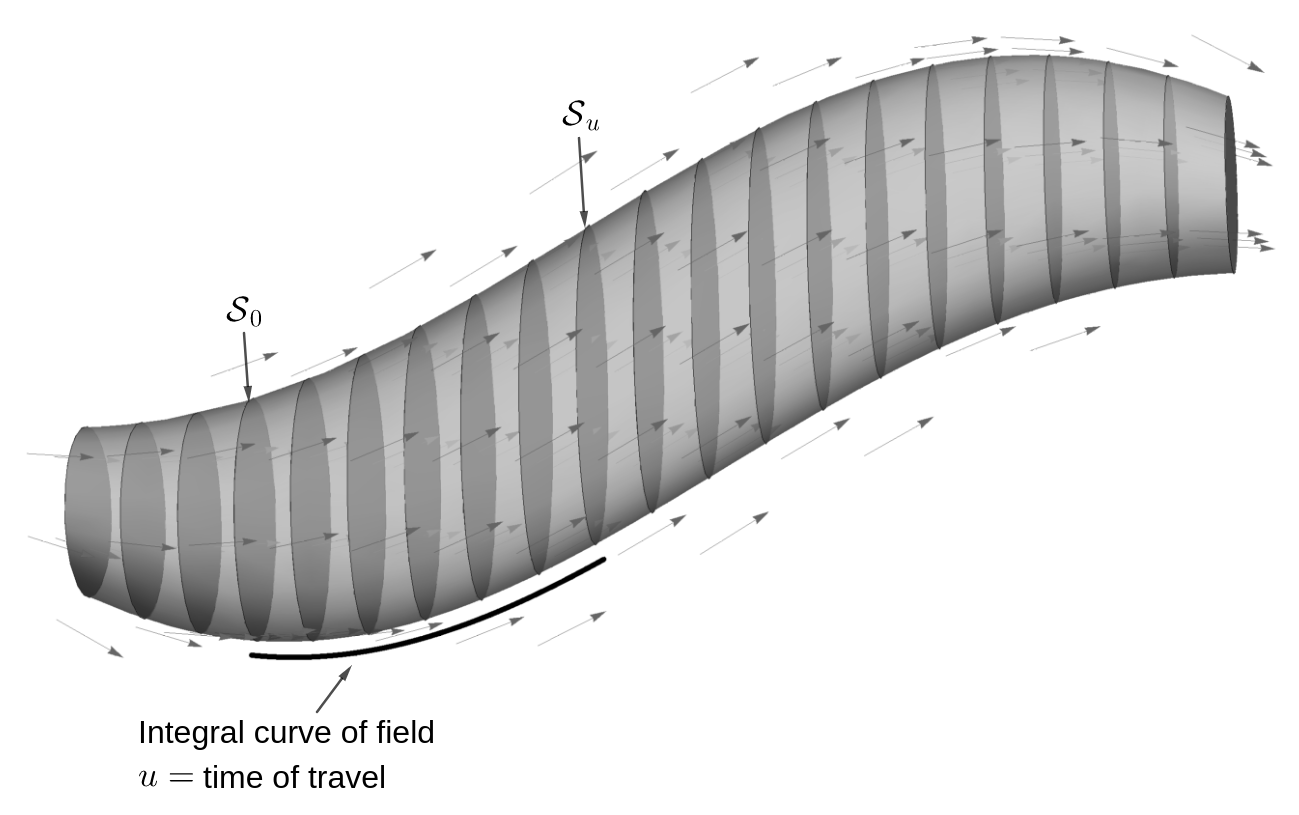}\caption{\label{fig:Parametric-channel}A channel generated by ``sliding''
a cross section $\protect\SE_{0}$ along the integral curves of a
vector field. The wall $\protect\WA$ consists of the surface that
encapsulates the channel, but not including the cross sections at
its extreme sides.}
\end{figure}

We are interested in studying channel-like objects, which we will
denote by $\CH$, in an $n-$dimensional space $M$\footnote{Usually $M$ stands for flat space of dimension either two or three,
but our results hold for the general case where $M$ is an arbitrary
$n$-dimensional oriented Riemannian manifold.}. We will construct $\CH$ using the following procedure (see Figure
\ref{fig:Parametric-channel}). Let $\SE_{0}$ be a $(n-1)$-dimensional
hyper-surface with boundary in $M$ and let $\UU$ a vector field
in $\CH$. For a given real number $u$, let $\SS_{u}$ be the hyper-surface
obtained by ``sliding'' $\SS_{0}$ along the \emph{integral curves}\footnote{An integral curve of $\UU$ is a curve $x=x(t)$ in $M$ that satisfies
$\der xt(t)=\UU(x(t))$.} of $\UU$ for a duration of $u$. We will refer to $\SE_{u}$ as
the cross section of $\CH$ at $u$. If $\CH$ is the union of the
cross sections $\SS_{u}$, we will say that the vector field $\UU$
generates $\CH$. If $\partial\SE_{u}$ is the boundary of $\SE_{u}$,
the wall $\WA$ of $\CH$ is the union of the sets $\partial\SE_{u}$
for all $u$'s. For $x$ in $\CH$ we will let $u(x)$ be equal to
the time it takes for a point in $\SE_{0}$ to reach $x$ (by following
an integral curve of $\UU$). In this context, we will refer to $u$
as a \emph{projection function}\footnote{Depending on the context will think $u$ as a scalar or as a function.}
for the channel. Notice that $\SS_{s}$ can be characterized as the
set of points in $\CH$ at which $u(x)=s$. 

As a particular case of the above construction consider a \emph{parametric
channel,} obtained by using a parametrization function $x=x(u,\text{\ensuremath{v)}}$
where $u$ is a scalar and $v$ belongs to some region in $(n-1)$-dimensional
space. In this case the generating vector field of the channel is
\[
\UU(x)=\pd xu(u(x),v(x)),
\]
where $u(x)$ and $v(x)$ are the $u$ and $v$ coordinates of the
point $x$ in $M$. 
\begin{figure}

\includegraphics[scale=0.95]{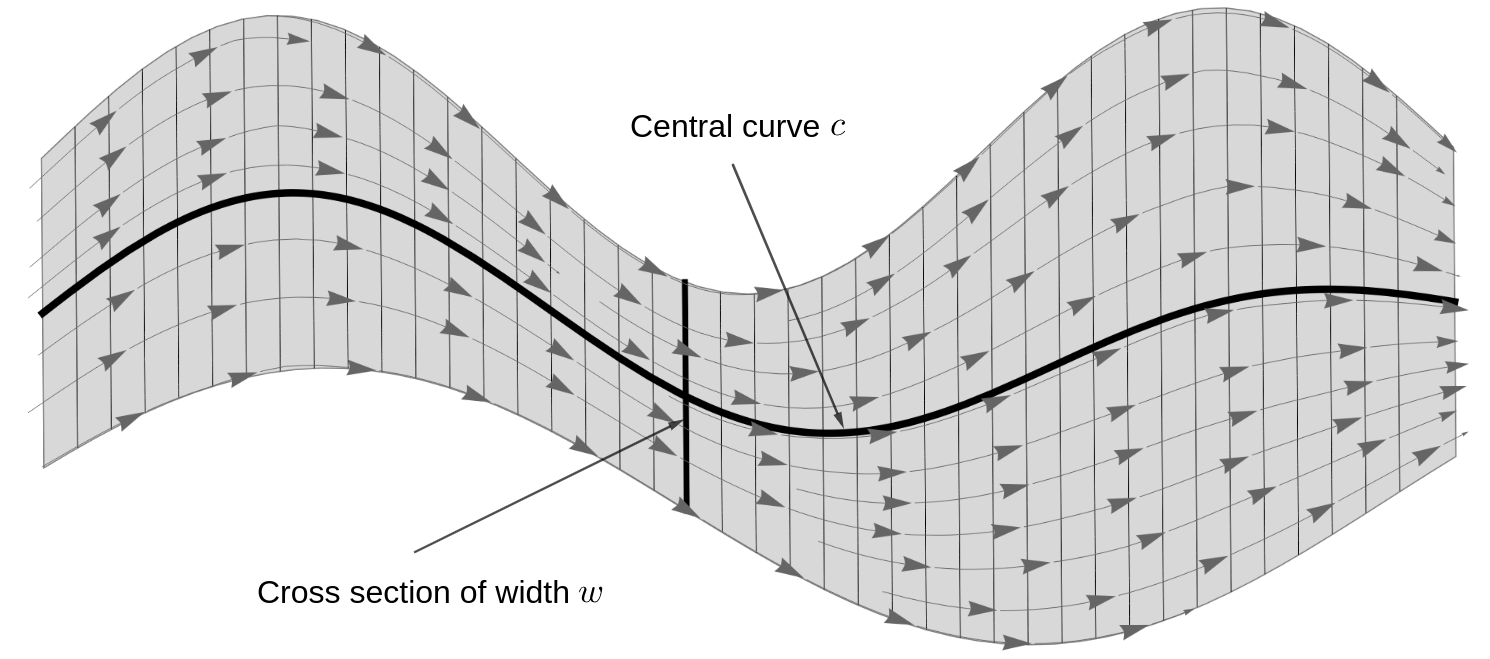}\caption{\label{fig:Parametric-Graph}Parametric channel with central function
$c=c(u)$ and width function $w=w(u)$.}
\end{figure}

\begin{example}
For $n=2$ let the variable $v$ be in re region $-1/2\leq v\leq1/2$
and define (see Figure \ref{fig:Parametric-Graph})
\[
x(u,v)=(u,c(u)+vw(u)),
\]
for scalar valued functions $c=c(u)$ and $w=w(u)$. We have that
\[
\pd xu(u,v)=(1,c'(u)+vw'(u)).
\]
If we write $x=(x_{1},x_{2})$, then from the formulas
\[
x_{1}=u\and x_{2}=c(u)+vw(u)
\]
we obtain
\[
v=\frac{x_{2}-c(x_{1})}{w(x_{1})}.
\]
Hence, the generating vector field of the channel is
\[
\UU(x_{1},x_{2})=\left(1,c'(x_{1})+\left(\frac{x_{2}-c(x_{1})}{w(x_{1})}\right)w'(x_{1})\right).
\]
A projection function for this field is 
\[
u(x_{1},x_{2})=x_{1},
\]
and the cross section $\SS_{u}$ is a line parallel to the $x_{2}$-axis
intersecting the $x_{1}$-axis at $(u,0)$.
\end{example}

\begin{rem}
We constructed the projection function $u$ in terms the vector field
$\UU$, by letting $u(x)$ be the time it takes for an integral curve
of $\UU$ starting at $\SS_{0}$ to reach $x$. Alternatively, we
could first select a scalar valued function $u$ in $\CH$ and then
construct a generating vector field $\UU$ in $M$ that satisfies
\[
\nabla u(x)\cdot\UU(x)=1\text{ for all x in }\CH.
\]
This condition implies that $u$ is a projection function of for $\UU$.
The initial cross section $\SS_{0}$ is then chosen so that $u(x)=0$
for all $x$ in $\SS_{0}$.
\end{rem}

\subsection{\label{subsec:Harmonic-projection-maps} Natural projection functions
and fields}

\begin{figure}
\subfloat[]{\includegraphics[scale=0.25]{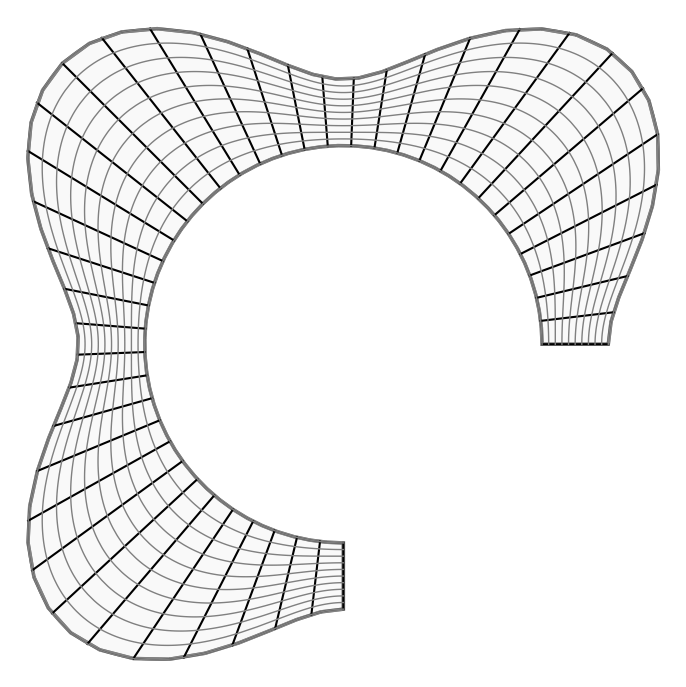}}\subfloat[]{\includegraphics[scale=0.25]{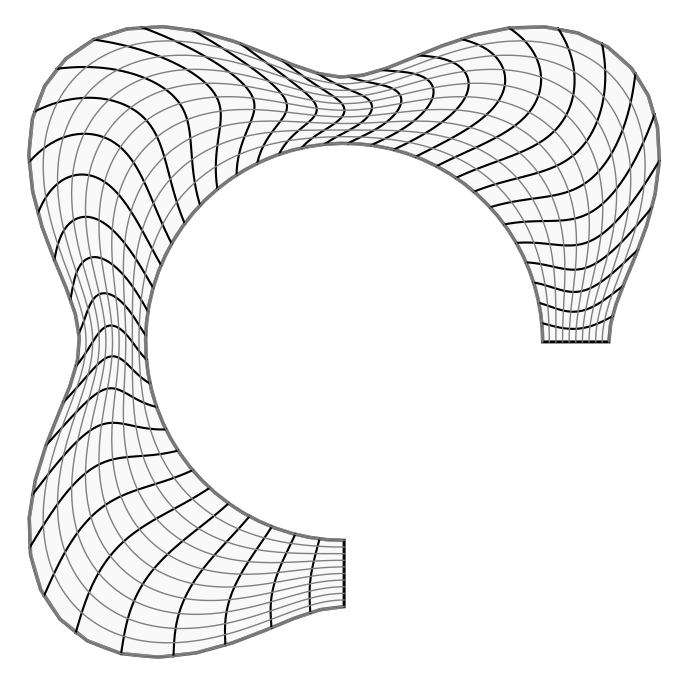}}

\caption{\label{fig:Two-cross-sections-2D}Two sets of cross sections associated
with two different generating vector fields of the same channel. }
\end{figure}

A channel $\CH$ can have many generating fields, which in general
produce different sets of cross sections (see Figure \ref{fig:Two-cross-sections-2D}
). This observation leads to the following problem.
\begin{problem}
Consider a channel $\CH$ with two fixed cross sections $\AA$ and
$\BB$. Is there a way to chose a generating vector field $\UU$ for
$\CH$ such that $\AA$ and $\BB$ are cross sections of $\CH$ generated
by $\UU$, and generated cross in between them ``fit'' the geometry
of $\CH$ in a ``natural way''?
\begin{figure}
\subfloat[\label{fig:Harmonic-Cross-sections}Cross sections]{\includegraphics[scale=0.5]{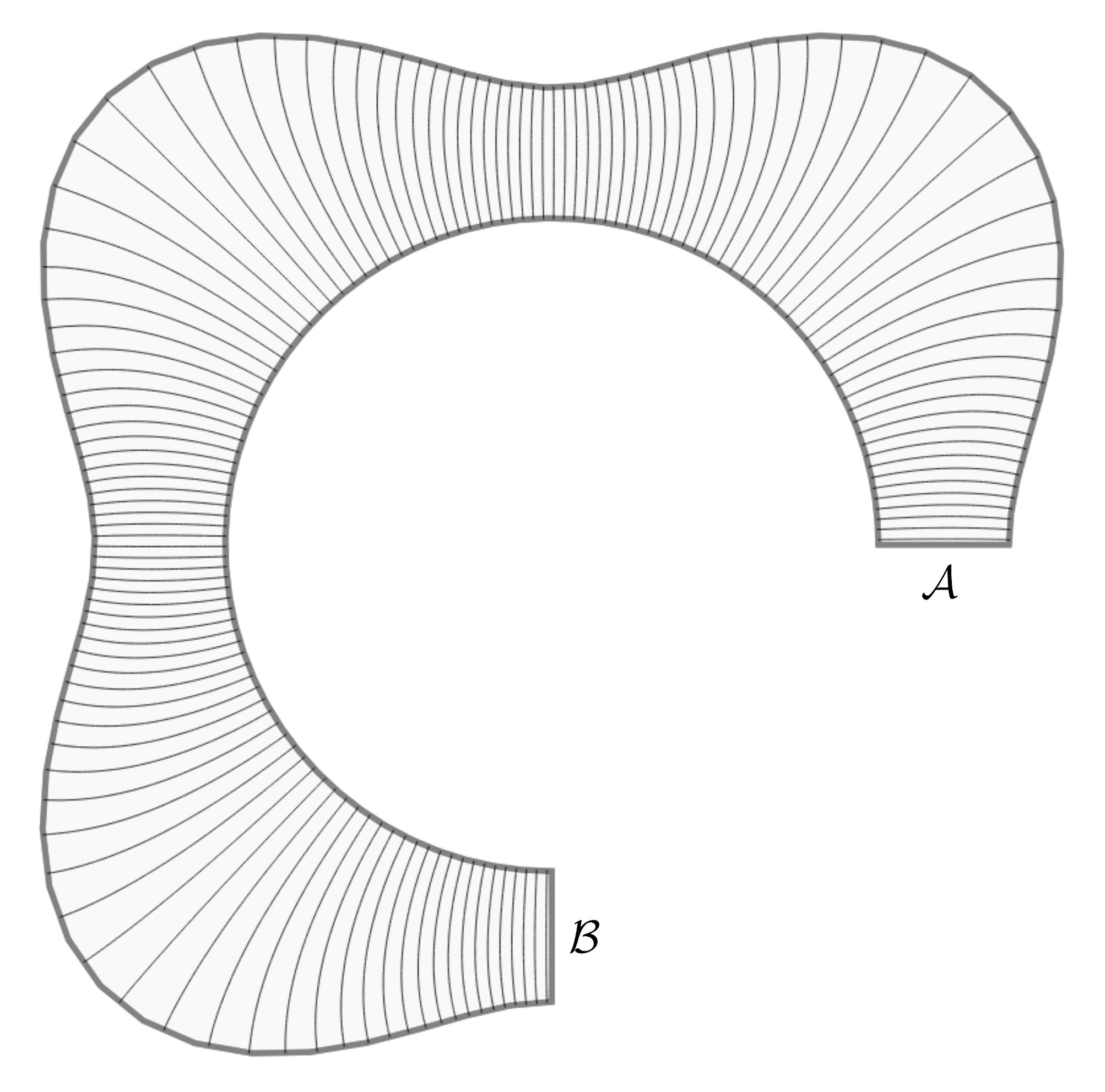}}\subfloat[Generating vector field]{\includegraphics[scale=0.5]{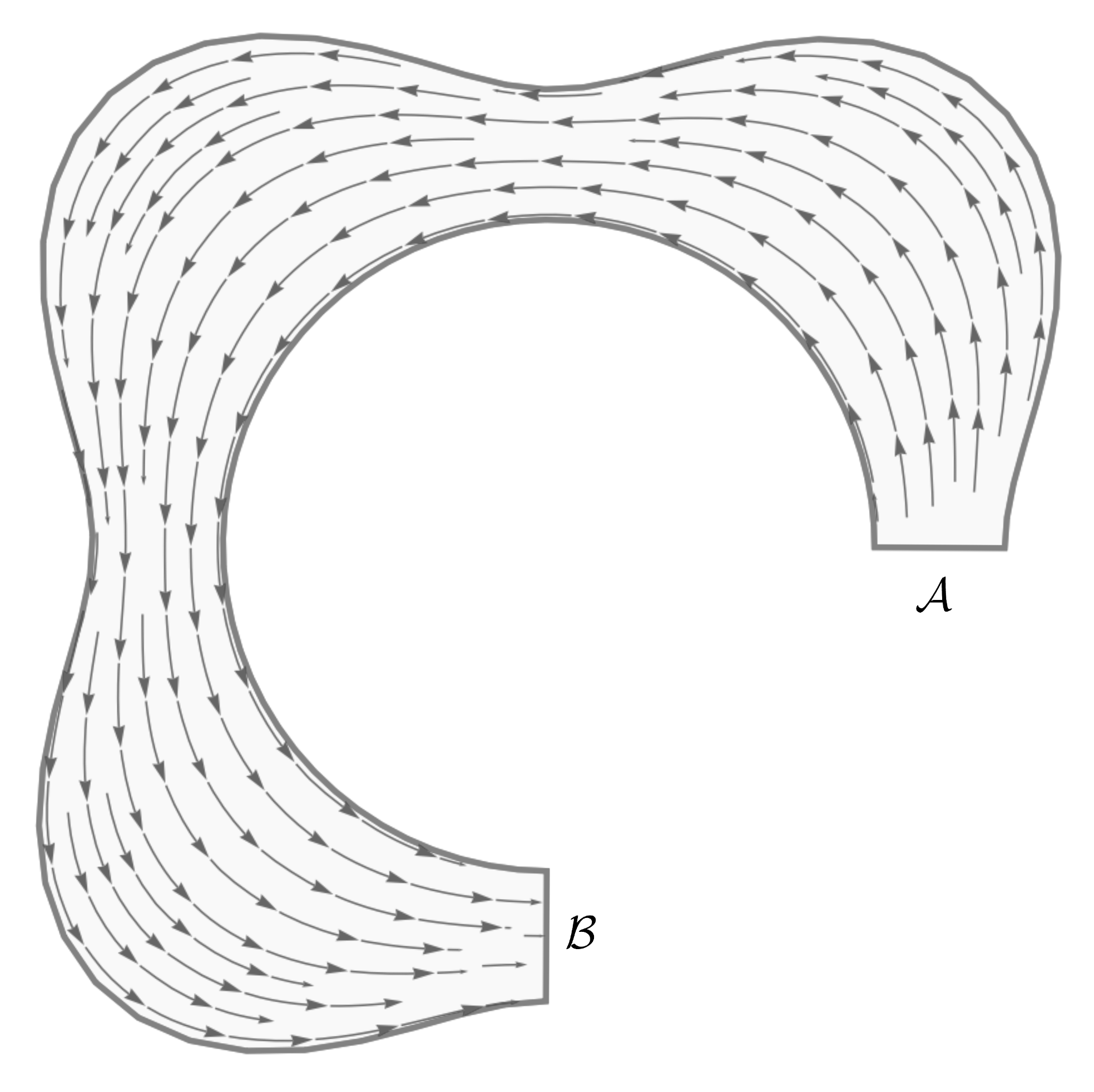}}\caption{\label{fig:Cross-section1}Cross sections and generating field associated
to a harmonic function $h$ on a channel. This function takes the
values $0$ and $1$ on $\protect\AA$ and $\protect\BB$ (respectively),
and its gradient has no flux across the other walls of the channel
(i.e in $\protect\WA$).}
\end{figure}
\end{problem}

We will argue that a ``natural way'' to choose $\UU$ is as follows.
For two different scalars $a$ and $b$ let $h$ be a harmonic function
(i.e $\Delta h=0)$ on $\CH$ such that $\nabla h$ has no flux across
$\WA$, and satisfies the boundary conditions
\begin{equation}
h(x)=\begin{cases}
a & \si x\el\AA,\\
b & \si x\el\BB.
\end{cases}\label{eq:BoundaryCondAB}
\end{equation}
We will let (see Figure \ref{fig:Cross-section1}) $\UU=\HH$, where
\begin{equation}
\HH(x)=\nabla h(x)/||\nabla h(x)||^{2}.\label{eq:HarmonicField}
\end{equation}
This field generates the channel $\CH$ and has $h$ as projection
function. We will refer to $h$ and $\HH$ as a \emph{natural projection
function} and a \emph{natural generating field} for the channel $\CH$
with lateral cross sections $\AA$ and $\BB$. 
\begin{rem}
If we write $h=h_{a,b}$ to specify that $h$ takes the values $a$
and $b$ in $\AA$ and $\BB$, respectively, then we have that
\[
h_{a,b}=a+(b-a)h_{0,1}.
\]
\end{rem}

\subsection{Flux functions}

\begin{figure}
\includegraphics[scale=0.25]{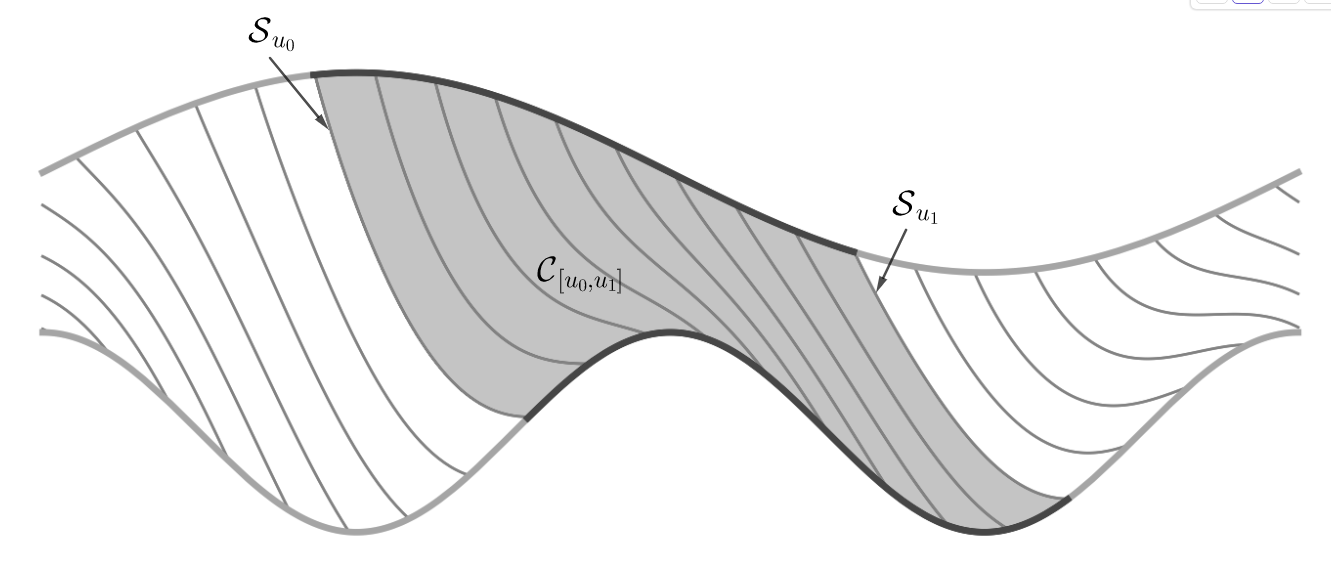}\caption{\label{fig:Channel-region-between}Channel region between two cross
section}
\end{figure}
The \emph{flux function} of a vector field $\VV$ in channel $\CH$
is defined as \footnote{Mathematically, this is the integral over $\SS_{u}$ of the component
of $\VV$ normal to $\SE_{u}$ }
\[
\fl{\VV}(u)=\text{flux of }\VV\text{ across }\SS_{u}.
\]
In the above definition we have assumed that we have fixed a generating
vector field $\UU$ for $\CH$ (and hence its cross sections). The
importance of the generating vector field $\UU$ of a channel is that
we will be able to express many quantities of interest in terms $\UU$.
In particular we will consider the flux functions of vector fields
$\VV$ of the form $\VV=\lambda\UU$, where $\lambda$ is a scalar
valued function in $\CH$. In this case we have that
\[
\fl{\VV}(u)=\der{c_{\lambda}}u(u),
\]
where for (see Figure \ref{fig:Channel-region-between} )
\[
\CH_{[u_{0},u_{1}]}=\text{union of the sets }\SE_{u}\text{ for }u_{0}\leq u\leq u_{1}
\]
we defined\footnote{The total concentration is obtained by integrating the function $\lambda$
in the region $\CH_{[0,u]}$.} 
\[
c_{\lambda}(u)=\text{total concentration of }\lambda\text{ in }\CH_{[0,u]}.
\]

\begin{example}
Let 
\[
\nu(u)=\text{volume of }\CH_{[0,u]}.
\]
For $\lambda=1$ we have that $\nu(u)=c_{\lambda}(u)$, and hence
\[
\der{\nu}u(u)=\fl{\UU}(u).
\]
\end{example}

An important property of the flux function, that we will use frequently,
is that if we can write\footnote{If we think of $\lambda$ as a function of $x$ (i.e $\lambda=\lambda(x))$
the expression $\lambda=\lambda(u)$ means that $\lambda(x)=\rho(u(x))$
for a scalar valued function $\rho=\rho(u)$. Our notation has the
advantage of avoiding the use the extra function $\rho$.} $\lambda=\lambda(u)$, then 
\[
\fl{\lambda V}(u)=\lambda(u)\fl V(u)
\]

\section{Effective diffusion on channels}

For a given channel $\CH$, we are interested in studying the evolution
of a density function $P=P(x,t)$ that obeys the diffusion equation
\[
\pd Pt(x,t)=D_{0}\Delta P(x,t).
\]
We will assume reflective boundary conditions on the wall $\WA$ of
$\CH$, i.e the gradient of $P$ has no flux across $\WA$. Using
a projection function in $\CH$, we will try to reduce the above equation
to a diffusion equation in a $1$-dimensional spatial variable. To
do this, we define the \emph{total concentration function} as
\[
c(u,t)=\text{total concentration of density }P\text{ in }\CH_{[0,u]}\text{ at time }t.
\]
and the \emph{volume}\footnote{When we speak of volume in $n$-dimensional space we are referring
to $n$-dimensional volume, i.e length for $n=1$, area for $n=2$,
etc.} function $\nu$ by
\begin{equation}
\nu(u)=\text{volume of }\CH_{[0,u]}.\label{eq:volumefunction}
\end{equation}
We can now define the \emph{effective concentration} as
\[
p(u,t)=\pd cu(u,t)/\der{\nu}u(u).
\]
If we let $u=u(\nu)$ be the value of $u$ that corresponds to volume
$\nu$ and $p(\nu,t)=p(u(\nu),t)$, then
\[
p(\nu,t)=\pd c{\nu}(\nu,t).
\]
Hence, the total concentration of $P$ in the region $\CH_{[u_{1},u_{2}]}$
at time $t$ is given by
\[
\int_{\nu_{1}}^{\nu_{2}}p(\nu,t)d\nu\for\nu_{i}=\nu(u_{i}).
\]

\begin{rem}
In most of the literature the effective concentration is defined as
\begin{equation}
p(u,t)=\der cu(u,t),\label{eq:ClassicEffectiveDensity}
\end{equation}
so that the total concentration of $P$ in $\CH_{[u_{1},u_{2}]}$
is 
\[
\int_{u_{1}}^{u_{2}}p(u,t)du.
\]
From a mathematical point of view the definition of the effective
concentration \ref{eq:ClassicEffectiveDensity} is not convenient
for the following reason. If we introduce a new variable $v=v(u)$,
with our definition of effective concentration we have that 
\begin{align*}
p(u,t) & =\pd cu(u,t)/\left(\der{\nu}u(u)\right)\\
 & =\left(\pd cv(v(u),t)\der vu(u)\right)/\left(\der{\nu}v(v(u))\der vu(u)\right)\\
 & =\pd cv(v(u),t)\der{\nu}v(v(u))\\
 & =p(v(u),t).
\end{align*}
This is the proper formula for the change of variable of a function.
On the other hand, if we define $p$ as in \ref{eq:ClassicEffectiveDensity}
we have
\[
p(u,t)=\der cu(u,t)=\der{\nu}u(u)\der c{\nu}(\nu(u),t)=\der{\nu}u(u)p(\nu(u),t),
\]
which is the way vectors (not functions) transform under a change
of variable. 
\end{rem}

\subsection{Infinite transversal diffusion rate}

If we assume that the density function $P$ stabilizes infinitely
fast along the cross sections $\SS_{u}$ of $\CH$, which is equivalent
to $P$ being constant along them, we arrive at the \emph{effective
diffusion equation} (see Section \ref{subsec:Infinite-transversal-diffusion})

\begin{equation}
\pd pt(u,t)=\dv(\dc(u)\nabla p(u,t)),\label{eq:EffecitveDiffEq}
\end{equation}
where the \emph{effective diffusion coefficient} is given by\footnote{In the expression for flux function in this formula $u$ is being
interpreted as a projection function on the channel $\CH$, and its
gradient computed with respect to the metric in $M$.}
\begin{equation}
\dc(u)=D_{0}\fl{\nabla u}(u)\der{\nu}u(u).\label{eq:EffectiveDiffusionIDR}
\end{equation}
The divergence and gradient operators $\dv$ and $\nabla$ in formula
\ref{eq:EffecitveDiffEq} are the ones associated to the metric\footnote{This metric defines a distance function $d$ between values $u_{1}$
and $u_{2}$, given by
\[
d(u_{1},u_{2})=\int_{u_{1}}^{u_{2}}\sqrt{g(u)}du=\text{\ensuremath{\nu(u_{2})-\nu(u_{1})},}
\]
which is the volume of the region $\CH_{[u_{1},u_{2}]}$. }
\begin{equation}
g(u)=\left(\der{\nu}u(u)\right)^{2},\label{eq:AreaMetric}
\end{equation}
i.e
\[
\nabla p=\frac{1}{g}\pd pu\and\nabla\cdot j=\frac{1}{\sqrt{g}}\pdo u\left(\sqrt{g}j\right).
\]
Observe that if we let $u=\nu$ then $g=1,$ and hence
\[
\nabla p=\pd p{\nu},\dv j=\pd j{\nu}
\]
and
\[
\dc(\nu)=D_{0}\fl{\nabla\nu}.
\]

\begin{rem}
The effective diffusion coefficient $\dc$ connects the \emph{effective
flux}
\[
j(u,t)=\fl{J_{t}}(u)/\der{\nu}u(u)\where J_{t}=D_{0}\nabla P_{t}
\]
with the gradient of the effective density function. More concretely,\emph{
Fick's first law} establishes that (see sections \ref{subsec:The-effective-continuity}
and \ref{subsec:Infinite-transversal-diffusion})
\[
j(u,t)=-\dc(u)\nabla p(u,t).
\]
\begin{rem}
The condition of $P$ being constant along the cross sections of the
channel implies that we can write (see Section \ref{subsec:Infinite-transversal-diffusion})
\[
P(x,t)=p(u(x),t),
\]
where $p$ is the effective density function and $u$ is the projection
function. If we had used the definition of effective concentration
found in most of the literature, this identity would not hold.
\end{rem}

\end{rem}

\subsubsection*{Comparison with the generalized Fick-Jacobs equation}

If we let
\[
\sigma(u)=\der{\nu}u(u)\and p_{f}(u,t)=\der cu(u,t),
\]
we can write equation \ref{eq:EffecitveDiffEq} as a \emph{generalized
Fick-Jacobs equation}
\[
\der{p_{f}}t(u,t)=\pdo u\left(\sigma(u)\dc_{f}(u)\pdo u\left(\frac{p_{f}(u,t)}{\sigma(u)}\right)\right),
\]
where the effective diffusion coefficient is given by
\[
\dc_{f}(u)=\frac{D_{0}\fl{\nabla u}(u)}{\sigma(u)}.
\]
If we define the effective flux $j_{f}$ as
\[
j_{f}(u,t)=\fl{J_{t}}(u,t)
\]
then we have the continuity equation (see \ref{subsec:The-effective-continuity})
\[
\pd{p_{f}}t+\pd{j_{f}}u=0.
\]
Using the Fick-Jacobs equation we conclude that
\[
j_{f}=-\sigma\dc_{f}\pdo u\left(\frac{p_{f}}{\sigma}\right).
\]
From these equations and the formulas
\[
p=p_{f}/\sigma\and j=j_{f}/\sigma
\]
we obtain
\[
j(u,t)=-\dc_{f}(u)\pd pu(u,t).
\]
We conclude that the difference between the effective diffusion coefficient
given by the generalized Fick-Jacobs equation and ours is that: in
the first case the gradient used in Fick's first law is that associated
to the metric $g=1$, and in the second case it is that associated
to the metric $g(u)=\sigma(u)^{2}$. The formula connecting both coefficients
is 
\[
\dc_{f}(u)=\frac{\dc(u)}{\sigma(u)^{2}}.
\]
Observe that when the cross sections of $\CH$ are parametrized by
the volume variable $\nu$, we have that $\sigma(\nu)=1$ and hence
\[
\dc_{f}(\nu)=\dc(\nu)=D_{0}\fl{\nabla\nu}(\nu).
\]
Furthermore, in this case both the effective diffusion equation and
the Fick-Jacobs equation become the diffusion equation 
\[
\pd pt(\nu,t)=\pdo{\nu}\left(\dc(\nu)\pd p{\nu}(\nu,t)\right).
\]

\subsubsection*{Cross section density function}

If we define the area\footnote{We refer to area as $(n-1)$-dimensional area. For $n=2$ this means
length, for $n=3$ this means area in the usual sense, etc. By convention
we speak of volume when we want to measure the ``extent'' of $n$-dimensional
objects in $n$-dimensional space, and area when we want to measure
the ``extent'' of $(n-1)$-dimensional objects in $n$-dimensional
space.} function as
\[
\AA(u)=\text{area of }\SS_{u}
\]
and let
\[
\GG(u)=\frac{\fl{\nabla u}(u)}{\AA(u)},
\]
then we can write
\[
\dc(u)=D_{0}\AA(u)\GG(u)\der{\nu}u(u).
\]
Since the cross sections $\SE_{u}$ are the level sets of $u$, the
vector field $\nabla u$ is orthogonal to them. Hence, if we orient
the cross sections $\SS_{u}$ so that their normal fields have the
same direction as $\nabla u$, we have that
\[
\GG(u)=\text{average value of }|\nabla u|\text{ on }\SE_{u}.
\]
This number measures the average density of cross sections near $\SS_{u}$,
and we will refer to $\GG$ as the \emph{cross section density function}.
If the cross sections of the channel are parametrized by the volume
variable $\nu$, we have that
\[
\dc(\nu)=D_{0}\AA(\nu)\GG(\nu).
\]

\subsection{Finite transversal diffusion rate}

Consider a channel $\CH$ whose cross sections $\SE_{u}$ are generated
by a vector field $\UU$, and let us drop the assumption that the
density function $P=P(x,t)$ stabilizes infinitely fast along these
cross sections. To give a formula for the effective diffusion coefficient
$\dc=\dc(u)$, we will make use of the natural projection function
$h$ and the natural generating field $\HH$ of the channel $\CH$
with lateral cross sections $\SS_{u_{0}}$ and $\SS_{u_{1}}$ for
$u_{0}<u_{1}$(see section \ref{subsec:Harmonic-projection-maps}).
In this context, we will refer to the projection function $u$ and
the field $\UU$ as the \emph{imposed projection function and field}
(to distinguish them from the natural ones: $h$ and $\HH$).

Let $\rho$ be the effective density function of $h$ under the projection
map $u$, i.e
\[
\rho(u)=\fl{hU}(u)/\der{\nu}u(u).
\]
In section  \ref{subsec:Finite-transversal-diffusion} we proved that
(for $u$ with $u_{0}\leq u\leq u_{1}$) the effective diffusion coefficient
$\dc$ appearing in formula \ref{eq:EffecitveDiffEq} can be computed
as
\begin{equation}
\dc(u)=\JJ\left(\der{\nu}u(u)\right)^{2}/\fl{\lambda U}(u),\label{eq:DiffCoeffFiniteTDRC}
\end{equation}
where $\lambda=\lambda(x)$ is a scalar valued function in $\CH$
defined by\footnote{The gradient and divergence operator appearing in this formula are
computed with respect to the metric in $M$.}
\begin{equation}
\lambda=\nabla h\cdot U+(h-\rho\circ u)\nabla\cdot U\label{eq:LambdaDiffussion}
\end{equation}
and the constant $\JJ$ is given by\footnote{By using the fact that $h$ is harmonic we showed in \ref{subsec:Finite-transversal-diffusion}
that the function $\JJ(u)=\fl{\nabla h}(u)$ is a constant function,
i.e independent of $u.$ }
\[
\JJ=D_{0}\fl{\nabla h}(u_{0}).
\]

\subsubsection*{Channels with natural projection map}

If for a given channel $\CH$ we choose the imposed projection map
an generating vector field to be the natural ones, i.e 
\[
\UU=\HH\and u=h,
\]
then we have that
\[
\nabla h\cdot U=\frac{\nabla h\cdot\nabla h}{||\nabla h||^{2}}=1.
\]
and
\begin{align*}
\rho(u(x)) & =\fl{hU}(u(x))/\der{\nu}u(u(x))\\
 & =h(x)\fl U(u(x))/\der{\nu}u(u(x))\\
 & =h(x),
\end{align*}
where we have made use of the formula
\[
\der{\nu}u(u)=\fl U(u).
\]
Hence $\lambda=1$ in formula \ref{eq:LambdaDiffussion}, which implies
\[
\dc(u)=\JJ\left(\der{\nu}u(u)\right)^{2}\left(\fl U(u)\right)^{-1}=D_{0}\fl{\nabla u}(u)\der{\nu}u(u).
\]
We conclude that\emph{ when using the natural projection function
and field of a channel, the formulas for the effective diffusion coefficient
in the finite and infinite diffusion transversal rate cases coincide. }

\section{Derivation of the effective diffusion coefficient formula}

Let $M$ be an oriented Riemannian manifold of dimension $n$. We
are interested in the diffusion equation\footnote{Given the variety of mathematical objects that we will use, throughout
this section we won't follow the convention of using bold face to
denote non-scalar quantities.}
\[
\pd Pt(x,t)=\dv(D(x)\nabla P(x,t)),
\]
where $P:M\times\RR\rightarrow\RR$ is a time dependent function in
$M$ and $D(x):T_{x}X\rightarrow T_{x}X$ is a linear map for every
$x$ in $M$. The divergence and gradient operators in the above formula
can defined in terms of exterior algebra operations as
\[
\dv J=*d*J^{\flat}\and\nabla P=(dP)^{\#},
\]
where $d:\bigwedge^{k}M\rightarrow\bigwedge^{k+1}M$ is the exterior
derivative, $*:\bigwedge^{k}M\rightarrow\bigwedge^{n-k}M$ the Hodge
star operator, and the musical isomorphisms $\sharp$ and $\flat$
allow us to identify 1-forms and vector fields. If we let $g$ stand
for the metric tensor in $M$ and use local coordinates $x_{1},\ldots x_{n}$,
we can write
\[
\dv J=\frac{1}{|g|^{1/2}}\sum_{i=1}^{n}\pdo{x_{i}}\left(|g|^{1/2}J_{i}\right)\where|g|=\det(g)
\]
and
\[
(\nabla P)^{i}=\sum_{j=1}^{n}g^{ij}\pd P{x_{j}}\where(g^{ij})=(g_{ij})^{-1}.
\]
\[
.
\]
In a homogeneous and isotropic medium the diffusion has the form
\begin{equation}
\pd Pt(x,t)=D_{0}\Delta P(x,t)\where\Delta=*d*d\and D_{0}\in\RR.\label{eq:FullDiffusionEquation}
\end{equation}

\subsection{Channels and projection functions}

Let $M$ be an $n$-dimensional oriented Riemannian manifold. We will
say that $\CH\subset M$ is generated by a vector field $U$, if $\CH$
is the union of phase curves of $U$ that have transversal intersection
with an $(n-1)$-dimensional sub-manifold with boundary $\SS_{0}$.
We will then say that $\CH$ is a channel generated by $U$ having
$\SE_{0}$ as an \emph{initial cross section}. A smooth function $u:\CH\rightarrow\RR$
is a projection function for the field $U$ if $du(U)=1$. We will
usually choose $\SE_{0}$ so that $\SE_{0}=u^{-1}(0)$. The cross
section $\SE_{s}$ of $\CH$ at $s$ is defined by the formula
\[
\SE_{s}=u^{-1}(s).
\]
Recall that the phase flow $\{\varphi_{s}:\CH\rightarrow\CH\}_{s\in\RR}$
of $U$ is defined by
\[
\left.\do s\right|_{s=0}(\varphi_{s}(x))=U(x),
\]
and satisfies
\begin{equation}
\varphi_{s_{1}+s_{2}}=\varphi_{s_{1}}\circ\varphi_{s_{2}}.\label{eq:GroupProperty}
\end{equation}
The condition $du(U)=1$ is then equivalent to
\[
u(\varphi_{s}(x))=u(x)+s,
\]
and hence
\[
\SE_{s+h}=\varphi_{h}(\SE_{s}).
\]
If we let $\WA=\partial\CH$ then $\WA$ is the union of phase curves
of $U$ that intersect $\partial\SE_{0}$. We will refer to $\WA$
as the reflective wall of $\CH$. We define
\[
\CH_{[s_{1},s_{2}]}=u^{-1}([s_{1},s_{2}])\and\WA_{[s_{1},s_{2}]}=\WA\cap\CH_{[s_{1},s_{2}]}.
\]

\subsubsection*{Flux functions}

We will let $\mu\in\bigwedge^{n}M$ stand for global volume form associated
with the metric in $M$. The orientation in $\CH$ will the the one
induced by the orientation of $M$, i.e we will let the orientation
form be the one obtained by restricting $\mu$ to $\CH$. Observe
that
\begin{equation}
du\wedge\iota_{U}(\mu)=du\wedge(*U^{\flat})=<du,U^{\flat}>\mu=du(U)\mu=\mu.\label{eq:MuSplit}
\end{equation}
If we let $i_{u}:\SE_{u}\rightarrow\CH$ be the inclusion map then
$i_{u}^{*}(\iota_{U}(\mu))$ is an $(n-1)$-form in $\SE_{u}$ which
vanishes no-where in $\SE_{u}$. We will use this form as an orientation
form for $\SS_{u}$. For a vector field $V$ in $\CH$ we define the
flux function $\fl V:\RR\rightarrow\RR$ as 
\[
\fl V(u)=\int_{\SS_{u}}\iota_{V}(\mu)=\int_{\SE_{u}}*(V^{\flat}).
\]
In particular
\[
\fl{\nabla P}(u)=\int_{\SE_{u}}*(dP).
\]

\subsubsection*{Change of variable formulas}

Let $u:\CH\rightarrow\RR$ be a projection function for $U$ and $f:\RR\rightarrow\RR$
a function with positive derivative. If we let $v=f\circ u$ then
\[
dv(x)=f'(u(x))du(x),
\]
which implies that $v$ is a projection function for the field $V$
defined by $V(x)=U(x)/f'(u(x))$. To simplify notation, we will write
the conditions $v=f\circ u$ and $u=f^{-1}\circ v$ as 
\[
v=v(u)\and u=u(v),
\]
where in the first equation $u$ is seen as a scalar value and $v$
as a function, and on the second formula $v$ is seen as a scalar
value and $u$ as a function. If we denote a cross sections at $u$
as $\SE_{u}$ and a cross section at $v$ as $\SS_{v}$, then the
formulas $u^{-1}(s)=v^{-1}(f(s))$ and $v^{-1}(s)=u^{-1}(f^{-1}(s))$
can be simply written as $\SE_{u}=\SS_{v(u)}$ and $\SS_{v}=\SS_{u(v)}$.
Furthermore, we have that
\begin{equation}
dv=\left(\der vu\right)du\and V=\left(\der vu\right)^{-1}U,\label{eq:ChangeOfCoords}
\end{equation}
where 
\[
\der vu(x)=f'(u(x)).
\]

\begin{rem}
If for a positive function $\lambda:\RR\rightarrow\RR$ we let $V=(\lambda\circ u)U$,
then we can recover the projection function $v=v(u)$ for $V$ as
\[
v(u)=v_{0}+\int_{u_{0}}^{u}\left(\frac{1}{\lambda(s)}\right)ds\for v_{0}\in\RR.
\]
\end{rem}

\subsection{Some useful identities}

We will now derive some identities that will be useful in our study
of diffusion processes on channels. Let $\CH$ be a channel generated
by a field $U$ and with a projection function $u$. In what follows
we will make use of Cartan's magic formula
\[
\LL_{U}=\iota_{U}\circ d+d\circ\iota_{U},
\]
where $\LL_{U}$ is the Lie derivative with respect to $U$. 
\begin{lem}
\label{lem:Integrals}If $\alpha$ is an $(n-1)$-form in $\CH$ and
we define
\[
f(u)=\int_{S_{u}}\alpha,
\]
then 
\[
f'(u)=\int_{S_{u}}\LL_{U}\alpha.
\]
If $\omega$ is an $n$-form in $\CH$ and for any $u_{0}\in\text{\ensuremath{\RR}}$
we define
\[
g(u)=\int_{\CH_{[u_{0},u]}}\omega
\]
then
\[
g'(u)=\int_{S_{u}}\iota_{U}(\omega)
\]
and 
\[
g''(u)=\int_{S_{u}}\left(d\lambda(U)+\lambda\dv U\right)\iota_{U}(\mu)\where\lambda=*\omega.
\]
\end{lem}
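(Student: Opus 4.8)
The plan is to derive both statements from the relation $\SS_{u+h}=\varphi_h(\SS_u)$ recorded above, together with the fact that $\varphi_h$ respects the orientation conventions fixed for the cross sections: since $(\varphi_{-h})_*U=U$ we have $\varphi_h^{*}(\iota_U\mu)=\iota_U(\varphi_h^{*}\mu)=J_h\,\iota_U\mu$ with $J_h>0$ (the flow is isotopic to the identity), so $\varphi_h\colon\SS_u\to\SS_{u+h}$ carries the orientation form $i_u^{*}(\iota_U\mu)$ to a positive multiple of $i_{u+h}^{*}(\iota_U\mu)$. Hence, for any $(n-1)$-form $\alpha$ on $\CH$,
\[
f(u+h)=\int_{\SS_{u+h}}\alpha=\int_{\varphi_h(\SS_u)}\alpha=\int_{\SS_u}\varphi_h^{*}\alpha .
\]
I would then differentiate this at $h=0$, moving $\der{}h$ inside the integral (legitimate because $\SS_u$ is compact and $\alpha$ is smooth) and invoking the definition of the Lie derivative $\derz h\varphi_h^{*}\alpha=\LL_U\alpha$; this yields the first formula $f'(u)=\int_{\SS_u}\LL_U\alpha$, and the computation is the same at every $u$ by the group property $\varphi_{u+h}=\varphi_h\circ\varphi_u$.

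For the second part I would first rewrite $\omega$: since $*\mu=1$ we have $\omega=(*\omega)\,\mu=\lambda\mu$, and multiplying the identity $\mu=du\wedge\iota_U\mu$ from equation \ref{eq:MuSplit} by $\lambda$ gives $\omega=du\wedge\iota_U\omega$. Foliating $\CH_{[u_0,u]}$ by the cross sections $\SS_t$, $t\in[u_0,u]$ (equivalently, applying the fibered-integration / coarea formula for the submersion $u$), this gives
\[
g(u)=\int_{\CH_{[u_0,u]}}du\wedge\iota_U\omega=\int_{u_0}^{u}\left(\int_{\SS_t}\iota_U\omega\right)dt ,
\]
so $g'(u)=\int_{\SS_u}\iota_U\omega$ by the fundamental theorem of calculus. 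Now $g'$ has exactly the form of the function $f$ treated above, with $\alpha=\iota_U\omega$, hence $g''(u)=\int_{\SS_u}\LL_U(\iota_U\omega)$, and it remains to identify the integrand. Writing $\omega=\lambda\mu$ and using the Leibniz rule for $\LL_U$,
\[
\LL_U(\iota_U\omega)=\LL_U\!\left(\lambda\,\iota_U\mu\right)=(U\lambda)\,\iota_U\mu+\lambda\,\LL_U(\iota_U\mu),
\]
and, by Cartan's magic formula together with $\iota_U\iota_U=0$, $d\mu=0$ and $\LL_U\mu=(\dv U)\mu$,
\[
\LL_U(\iota_U\mu)=\iota_U\,d(\iota_U\mu)=\iota_U(\LL_U\mu)=(\dv U)\,\iota_U\mu .
\]
Since $U\lambda=d\lambda(U)$, substitution gives $\LL_U(\iota_U\omega)=\big(d\lambda(U)+\lambda\,\dv U\big)\iota_U\mu$, which is the asserted expression for $g''$.

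The only genuinely delicate point is the orientation bookkeeping: the change of variables $\int_{\varphi_h(\SS_u)}\alpha=\int_{\SS_u}\varphi_h^{*}\alpha$ and the fibered-integration step both require the sign conventions to be lined up, which is why the thing to pin down first is that $\varphi_h$ is orientation-preserving on cross sections (using that it is isotopic to the identity) and that the splitting $\mu=du\wedge\iota_U\mu$ is precisely what fixes the orientation of the fibers. Once that is in place, everything else is the standard "differentiation of an integral over a moving domain" combined with routine Cartan calculus; the interchange of $\der{}h$ with $\int$ and the existence of the flow on the relevant portion of $\CH$ cause no difficulty since $\CH$ is by definition a union of phase curves of $U$.
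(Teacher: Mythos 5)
Your proof is correct and follows essentially the same route as the paper: differentiate $f(u+h)=\int_{\SS_u}\varphi_h^{*}\alpha$ to get the Lie derivative, fiber-integrate $\omega=du\wedge\iota_U\omega$ over the cross sections for $g'$, and apply the first part to $\alpha=\iota_U\omega$ for $g''$. The only (harmless) divergence is at the last step, where you expand $\LL_U(\lambda\,\iota_U\mu)$ by the Leibniz rule and compute $\LL_U(\iota_U\mu)=(\dv U)\iota_U\mu$ directly, whereas the paper first checks $\LL_U(\iota_U\omega)=\iota_U(\LL_U\omega)$ and then expands $\LL_U(\lambda\mu)$; your added care with orientations is a welcome bonus rather than a departure.
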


\begin{proof}
From the formula $S_{u+h}=\varphi_{h}(S_{u})$ we obtain
\[
f(u+h)-f(u)=\int_{S_{u+h}}\alpha-\int_{S_{u}}\alpha=\int_{S_{u}}(\varphi_{h}^{*}\alpha-\alpha),
\]
and hence
\[
f'(u)=\lim_{h\mapsto0}\int_{S_{u}}\frac{1}{h}(\varphi_{h}^{*}(\alpha)-\alpha)=\int_{S_{u}}\LL_{U}\alpha.
\]
To prove the second part of the lemma observe that
\[
g(u+h)-g(u)=\int_{\CH_{[u,u+h]}}\omega,
\]
and hence
\[
g'(u)=\lim_{h\mapsto0}\frac{g(u+h)-g(u)}{h}=\lim_{h\mapsto0}\frac{1}{h}\int_{u}^{u+h}\int_{\SE_{t}}(\iota_{U}(\omega))dt=\int_{\SE_{u}}\iota_{U}(\omega).
\]
Combining the previous results we obtain 
\[
g''(u)=\int_{S_{u}}\LL_{U}(\iota_{U}(\omega)).
\]
Using Cartan's magic formula it is easy to verify that
\[
\LL_{U}(\iota_{U}(\omega))=\iota_{U}(\LL_{U}(\omega)).
\]
We can write $\omega=\lambda\mu$ for $\lambda=*\omega$, and hence
\[
\LL_{U}(\omega)=\LL_{U}(\lambda\mu)=\iota_{U}(d\lambda)\mu+\lambda\LL_{U}\mu.
\]
Using this and the fact that $\LL_{U}\mu=(\dv U)\mu$, we conclude
that
\[
\iota_{U}(\omega)=(d\lambda(U)+\lambda(\dv U))\iota_{U}(\mu)
\]
\end{proof}

\subsection{\label{subsec:The-effective-continuity} The effective continuity
equation}

If we let the metric tensor in the $u$ variable be 
\[
g(u)=\left(\der{\nu}u(u)\right)^{2},
\]
then the divergence and gradient operators are given by the formulas
\[
\nabla\cdot j=g^{-1/2}\pdo u\left(g^{1/2}j\right)\and\nabla p=g^{-1}\pd pu.
\]
Consider a concentration function $P=P(x,t)$ and the flux vector
field $J=J(x,t)$. Let us write $P_{t}(x)=P(x,t)$ and $J_{t}(x)=J(x,t)$,
and for a channel $\CH$ define the \emph{effective flux function}
as
\[
j(u,t)=\fl{J_{t}}(u)/\der{\nu}u(u)\where\fl{J_{t}}=\int_{\SE_{u}}*J_{t}^{\flat}.
\]
and the \emph{effective concentration} as
\[
p(u,t)=\pd cu(u,t)/\der{\nu}u(u)\where c(u,t)=\int_{\CH_{[0,u]}}*P_{t}.
\]
By Lemma \ref{lem:Integrals} we have that
\[
\pd cu(u,t)=\int_{\SE_{u}}\iota_{U}(*P_{t})
\]
and
\[
\der{\fl{J_{t}}}u(u)=\int_{\SS_{u}}\LL_{U}(*J_{t}^{\flat})=\int_{\SS_{u}}(d\circ\iota_{U}+\iota_{U}\circ d)(*J_{t}^{\flat}).
\]
If we assume reflective boundary conditions on the wall $\WA$ of
$\CH$, we get
\[
\int_{S_{u}}d(\iota_{U}(*J_{t}^{\flat}))=\int_{\partial S_{u}}\iota_{U}(*J_{t}^{\flat})=0.
\]
Using the above formulas and the continuity equation 
\[
*\pd Pt(x,t)+d*J^{\flat}(x,t)=0
\]
we obtain
\[
\der{\fl{J_{t}}}u(u)=\int_{S_{u}}(\iota_{U}\circ d)(*J_{t}^{\flat})=-\int_{S_{u}}\iota_{U}\left(*\pd Pt\right),
\]
and hence 
\[
\int_{S_{u}}\iota_{U}\left(*\pd Pt\right)=\pdo t\int_{S_{u}}\iota_{U}(*P_{t})=\pdo t\left(\pd cu(u,t)\right).
\]
We conclude that
\[
\pdo t\left(\pd cu(u,t)\right)+\der{\fl{J_{t}}}u(u)=0,
\]
which implies that

\[
\pdo t\left(g^{-1/2}(u)\pd cu(u,t)\right)+g^{-1/2}(u)\pdo u\left(g^{1/2}(u)g^{-1/2}(u)\der{\fl{J_{t}}}u(u)\right)=0.
\]
This last equation is known \emph{effective continuity equation} and
can be re-written as
\begin{equation}
\pd pt(u,t)+\dv j(u,t)=0.\label{eq:EffectiveContinuity}
\end{equation}

\subsection{\label{subsec:Infinite-transversal-diffusion}Infinite transversal
diffusion rate}

The assumption of an infinite transversal diffusion rate is expressed
mathematically by letting 
\[
P(x,t)=\rho(u(x),t)
\]
for a function $\rho:\RR\times\RR\rightarrow\RR$. The effective density
function can then be written as
\[
p(u,t)=g(u)^{-1/2}\int_{S_{u}}u^{*}(\rho_{t})\iota_{U}\mu=\rho(u,t)g(u)^{-1/2}\int_{S_{u}}\iota_{U}(\mu).
\]
From the formula 
\[
\int_{S_{u}}\iota_{U}(\mu)=\der{\nu}u(u)=g(u)^{1/2},
\]
we conclude that
\[
p(u,t)=\rho(u,t).
\]
Using Fick's law $J_{t}=-D_{0}\nabla P_{t}$, we obtain
\[
\fl{J_{t}}(u)=\int_{S_{u}}*J_{t}^{\flat}=-D_{0}\int_{S_{u}}*(dP_{t})=-D_{0}\int_{S_{u}}*(u^{*}(d\rho_{t})).
\]
Since (for $s$ equal to the identity map in $\RR$)
\[
u^{*}(d\rho_{t})=u^{*}\left(\pd{\rho_{t}}sds\right)=u^{*}\left(\pd{\rho_{t}}s\right)du,
\]
we obtain
\[
\fl{J_{t}}(u)=-D_{0}\pd{\rho}u(u,t)\int_{S_{u}}*(du).
\]
Using this last formula and the fact that $\rho=p$, we obtain
\[
j(u,t)=g(u)^{-1/2}\fl{J_{t}}(u)=-\left(D_{0}g(u)^{1/2}\int_{S_{u}}*(du)\right)g(u)^{-1}\pd pu(u,t).
\]
Substitution of this formula for $j$ in the effective continuity
equation \ref{eq:EffectiveContinuity} leads to the \emph{effective
diffusion formula}
\begin{equation}
\pd pt(u,t)=\dv(\dc(u)\nabla p(u,t)),\label{eq:EffectiveDiffusionICoordFree}
\end{equation}
where the \emph{effective diffusion coefficient} is given by 
\begin{align*}
\dc(u) & =D_{0}\left(\int_{S_{u}}*(du)\right)g(u)^{1/2}.\\
 & =D_{0}\fl{\nabla u}(u)\der{\nu}u(u).
\end{align*}

\subsection{\label{subsec:Finite-transversal-diffusion}Finite transversal diffusion
rate}

We will now consider the case when density function $P=P(x,t)$ is
not necessarily constant along the cross sections of the channel.
In general it is not possible to define define $\dc=\dc(u)$ such
that the effective density function $p=p(u,t)$ satisfies the $1$-dimensional
diffusion equation \ref{eq:EffectiveDiffusionICoordFree} exactly,
but for many cases of narrow channels it is possible to find $\dc$
such that $p$ satisfy \ref{eq:EffectiveDiffusionICoordFree} to a
very good approximation. In any case, if such a $\dc$ existed we
could recover it from of a stable solution $\rho=\rho(u)$ to \ref{eq:EffectiveDiffusionICoordFree}.
In fact, if $\rho$ is such a function we have that
\[
\dv(\dc\nabla\rho)=0,
\]
which is equivalent to
\begin{equation}
\pdo u\left(\frac{\dc}{\sigma}\der{\rho}u\right)=0\where\sigma=g^{1/2}=\der{\nu}u.\label{eq:StableSol}
\end{equation}
Hence, we can find a constant $\JJ\in\RR$ such that
\begin{equation}
\dc(u)=\JJ\sigma(u)\left(\der{\rho}u(u)\right)^{-1}.\label{eq:DInfiniteTR}
\end{equation}

\begin{rem}
If we introduce a new variable $v=v(u)$, then we have that
\[
\dc(v)=\dc(u(v)),
\]
since
\begin{align*}
\dc(v) & =\JJ\der{\nu}v(v)\left(\der{\rho}v(v)\right)^{-1}\\
 & =\JJ\der{\nu}u(u(v))\der uv(v)\left(\der{\rho}u(u(v))\der uv(v)\right)^{-1}\\
 & =\dc(u(v)).
\end{align*}
\end{rem}

We will now assume that $\rho$ is the effective concentration function
of a stable solution $h=h(x)$ to the full diffusion equation \ref{eq:FullDiffusionEquation}
(with reflective boundary conditions on $\WA$). We then have that
\[
\rho(u)=\frac{1}{\sigma(u)}\int_{\SE_{u}}h\iota_{U}(\mu)=\frac{1}{\sigma(u)}\der cu(u)
\]
for 
\[
c(u)=\int_{\CH_{[0,u]}}h\mu,
\]
and hence
\[
\der{\rho}u=\frac{d}{du}\left(\frac{1}{\sigma}\der cu\right)=\frac{1}{\sigma}\left(\dder cu-\rho\dder{\nu}u\right).
\]
Using Lemma \ref{lem:Integrals} we obtain
\[
\dder cu=\int_{\SE_{u}}(dh(U)+h\dv U)\iota_{U}(\mu),
\]
and since 
\[
\nu(u)=\int_{\CH_{[0,u]}}\mu
\]
then
\[
\dder{\nu}u=\int_{\SE_{u}}(\dv U)\iota_{U}(\mu).
\]
We conclude that
\[
\dc(u)=\JJ\sigma^{2}(u)/\fl{\lambda\UU}(u)=\JJ\fl{\UU}(u)\left(\frac{\fl U(u)}{\fl{\lambda U}(u)}\right)
\]
where
\[
\lambda=dh(U)+(h-u^{*}(\rho))\dv U.
\]

\subsubsection*{Computation of $\protect\JJ$}

By definition, we have\footnote{Apparently $\JJ$ depends on $u$, but we will show below that $\JJ$
is actually a constant function (as required for the formula we computed
for the effective diffusion coefficient $\dc$) }
\[
\JJ(u)=\frac{\dc(u)}{\sigma(u)}\der{\rho}u(u).
\]
Using Fick's laws
\begin{align*}
j(u) & =-\dc(u)\nabla\rho(u)\\
J(x) & =-D_{0}\nabla h(x)
\end{align*}
and the formulas
\begin{align*}
j(u) & =\frac{1}{\sigma(u)}\int_{\SE_{u}}*J^{\flat}\\
\nabla\rho(u) & =\frac{1}{\sigma(u)^{2}}\der{\rho}u(u)
\end{align*}
we obtain
\[
\JJ(u)=D_{0}\int_{\SE_{u}}*(dh)=D_{0}\fl{\nabla h}(u).
\]
The function $\JJ=\JJ(u)$ is in fact a constant function (i.e independent
of $u$), since for any two values $u_{1}$ and $u_{2}$ we have that
(by Stokes Theorem and the reflective boundary conditions on $\WA)$
\[
\JJ(u_{2})-\JJ(u_{1})=D_{0}\int_{S_{u_{2}}-S_{u_{1}}}(*dh)=\int_{\CH_{[u_{1},u_{2}]}}d*dh=\int_{\text{\ensuremath{\CH}}_{[u_{1},u_{2}]}}*\Delta h=0.
\]

\subsubsection*{Lateral boundary conditions}

It is \emph{important} to notice that formula \ref{eq:DInfiniteTR}
holds only under the assumption that $\rho'(u)\not=0$ for all $u\in\RR$.
We can achieve this if for $\alpha\not=\beta$ we fix boundary the
conditions 
\begin{equation}
\rho(a)=\alpha\and\rho(b)=\beta.\label{eq:LateralBoundaryConditions}
\end{equation}
For fixed values of $\alpha$ and $\beta$ we will denote the stable
solution to \ref{eq:EffectiveDiffusionICoordFree} satisfying these
boundary conditions by $\rho_{\alpha,\beta}$. Using the linearity
of equation \ref{eq:StableSol} we obtain
\[
\rho_{\alpha,\beta}=\alpha+(\beta-\alpha)\rho_{0,1}.
\]
If we denote the constant $\JJ$ associated to $\rho_{\alpha,\beta}$
by $\JJ(\alpha,\beta)$ then
\[
\dc=\sigma\JJ(\alpha,\beta)\left(\der{\rho_{\alpha,\beta}}u\right)^{-1}.
\]
Since $\dc$ is independent of the choice of $\alpha$ and $\beta$
we must have
\[
\JJ(\alpha,\beta)\left(\der{\rho_{\alpha,\beta}}u\right)^{-1}=\JJ(0,1)\left(\der{\rho_{0,1}}u\right)^{-1},
\]
from which we obtain the formula
\[
\JJ(\alpha,\beta)=(\beta-\alpha)\JJ(0,1).
\]
The boundary conditions \ref{eq:LateralBoundaryConditions} can be
written in terms of $H$ (using Lemma \ref{lem:Integrals}) as
\begin{align*}
\frac{1}{\sigma(a)}\int_{\SE_{a}}h\iota_{U}(\mu) & =\alpha,\\
\frac{1}{\sigma(b)}\int_{\SE_{b}}h\iota_{U}(\mu) & =\beta.
\end{align*}
If we choose $h$ so that it is has constant value $h_{a}$ in $\SS_{a}$
and constant value $h_{b}$ in $\SS_{b}$, the above conditions become
\[
h_{a}=\alpha\and h_{b}=\beta.
\]

\subsection{Channels defined by harmonic conjugate functions}

Let $M$ be a 2-dimensional oriented surface. We will say that $u,v:M\rightarrow\RR$
are harmonic conjugate if
\[
dv=*du,
\]
or equivalently
\[
\nabla v=i\nabla u.
\]
Observe that in this case
\[
*dv=**du=-du.
\]
The existence of a harmonic conjugate $v$ for $u$ implies that $u$
and $v$ are harmonic, since
\begin{align*}
\Delta u & =*d*du=*(d^{2}v)=0,\\
\Delta v & =*d*dv=-*(d^{2}u)=0.
\end{align*}
For fixed value $v_{1},v_{2}\in\RR$, consider a channel $\CH$ defined
as
\[
\CH=\{x\in M|v_{1}\leq v(x)\leq v_{2}\},
\]
If we use a harmonic conjugate $u$ of $v$ as projection function
for this channel, then $u$ is a harmonic function with reflective
boundary conditions on $\WA$. The channels $\CH$ has generating
field
\[
U=\frac{\nabla u}{|\nabla u|^{2}}.
\]
The effective diffusion coefficient both in the infinite and finite
transversal diffusion rate cases coincide and is given by the formula
\[
\dc(u)=\JJ\der{\nu}u(u),
\]
where
\[
\JJ=\int_{\SE_{u}}*du=\int_{\SE_{u}}dv=v_{2}-v_{1}
\]
and
\[
\der{\nu}u(u)=\int_{\SE_{u}}\frac{*du}{|\nabla u|^{2}}=\int_{\SE_{u}}\frac{dv}{|\nabla v|^{2}}.
\]
Observe that we can parametrize a cross section $\SS_{u}$ with a
curve $x:[t_{1},t_{2}]\rightarrow\CH$ with
\[
\dot{x}(t)=\nabla v(x(t)),
\]
so that
\[
\AA(u)=\int_{t_{1}}^{t_{2}}|\dot{x}(t)|=\int_{t_{1}}^{t_{2}}\frac{|\nabla v(x(t))|^{2}}{|\nabla v(x(t)|}.
\]
Hence
\[
\AA(u)=\int_{\SS_{u}}\frac{dv}{|\nabla v|}.
\]

\subsection{Parametric channels}

In this section we will assume that the channel $\CH\subset M$ can
be parametrized by a map 
\[
\varphi:[a,b]\times\Omega\rightarrow M,
\]
where $\Omega$ is a $(n-1)$-dimensional sub-manifold with boundary
of $\RR^{n-1}$. In local coordinates we will write the elements of
$[a,b]\times\Omega$ as $(u,v)$ for $u\in[a,b]$ and $v=(v_{1},\ldots,v_{n-1})\in\RR^{n-1}$.
If denote the of points in $\CH$ by $x$ then we have that $x=\varphi(u,v)$,
which we will simply write as $x=x(u,v)$. We will let the generating
vector field for $\CH$ be
\[
U=\varphi_{*}\left(\pdo u\right),
\]
which has $u$ as a projection function. To compute the effective
diffusion coefficient for $\CH$ (in both the finite and infinite
transversal diffusion rate cases) we will need to compute
\[
\der{\nu}u,\fl{\nabla u},\rho,dh(U)\text{ and }\dv U,
\]
where $h$ is a natural projection function for $\CH$ and $\rho$
its corresponding effective density function. To compute the above
quantities in $(u,v)$-coordinates we will make use of the metric
tensor $g=\varphi^{*}(g_{M})$, where $g_{M}$ is the metric in $M$.
We have that
\[
g=\left(\begin{array}{cc}
\text{\ensuremath{\pd xu\text{\ensuremath{\cdot\pd xu}}}} & \pd xu\cdot\pd xv\\
\left(\pd xu\cdot\pd xv\right)^{T} & g_{v}
\end{array}\right),
\]
where
\[
\pd xu\cdot\pd xv=\left(\pd xu\cdot\pd x{v_{1}},\ldots,\pd xu\cdot\pd x{v_{n-1}}\right)
\]
and $g_{v}$ is the matrix with entries
\[
(g_{v})_{i,j}=\pd x{v_{i}}\cdot\pd x{v_{j}}.
\]
The volume form in $\CH$ is given by 
\[
\mu=\det(g)^{1/2}du\wedge dv,
\]
and hence
\begin{align*}
\der{\nu}u(u) & =\int_{\Omega}\iota_{\pdo u}(\mu)=\int_{\Omega}\det(g(u,v))^{1/2}dv\\
\AA(u) & =\int_{\Omega}\det(g_{v}(u,v))^{1/2}dv
\end{align*}
Observe that
\[
\nabla u=a_{0}\pdo u+\sum_{i=1}^{n-1}a_{i}\pdo{v_{i}}
\]
where
\[
\left(\begin{array}{c}
a_{0}\\
a_{1}\\
\vdots\\
a_{n-1}
\end{array}\right)=g^{-1}\left(\begin{array}{c}
1\\
0\\
\vdots\\
0
\end{array}\right).
\]
Since
\[
a_{0}=\frac{\det(g_{v})}{\det(g)},
\]
we conclude that
\begin{align}
\fl{\nabla u}(u) & =\int_{\Omega}\det(g(u,v))^{\frac{1}{2}}\iota_{\nabla u}(du\wedge dv)\nonumber \\
 & =\int_{\Omega}\left(\frac{\det(g_{v}(u,v))}{\det(g(u,v))^{\frac{1}{2}}}\right)dv.\label{eq:metricJ}
\end{align}
The divergence of $U$ can be computed using the the formula $d(\iota_{U}(\mu))=(\dv U)\mu$.
In our case we have that
\[
d(\iota_{U}(\mu))=d(\det(g)^{1/2}dv)=\pd{\det(g)^{1/2}}udu\wedge dv,
\]
and hence
\[
\dv U=\frac{\pdo u\left(\det(g)^{1/2}\right)}{\det(g)^{1/2}}=\frac{1}{2}\pdo u\left(\log(\det(g))\right).
\]
If $h$ is the natural projection map on the channel then
\[
dh(U)=\pd hu
\]
and
\[
\rho=\left(\int_{\Omega}h(u,v)\det(g(u,v))^{1/2}dv\right)/\left(\int_{\Omega}\det(g(u,v))^{1/2}dv\right)
\]

\bibliographystyle{plain}
\bibliography{myBib}

\end{document}